\newtheorem{lemma}{Lemma}
\newtheorem{theorem}{Theorem}
\newtheorem{definition}{Definition}
\newtheorem{notation}{Notation}
\newenvironment{proof}{\textit{Proof:}}{}
\newcommand{\qed}{\hfill$\Box$}
\newcommand{\ket}[1]{|#1\rangle}
\newcommand{\qgate}[1]{\mathsf{#1}}
\newcommand{\matr}[4]{\begin{pmatrix}{#1}&{#2}\\{#3}&{#4}\end{pmatrix}}
\newcommand{\bra}[1]{\langle#1|}
\newcommand{\gX}{\qgate{X}}
\newcommand{\gY}{\qgate{Y}}
\newcommand{\gZ}{\qgate{Z}}
\newcommand{\gH}{\qgate{H}}
\newcommand{\gCnot}{\qgate{CNot}}
\newcommand{\gP}{\qgate{P}}
\newcommand{\ghz}{\mathsf{GHZ}}
\newcommand{\ighz}{\mathsf{iGHZ}}
\title{Stabilizer States as a Basis for Density Matrices}
\author{Simon J.\ Gay\\
School of Computing Science,
University of Glasgow, UK}
\begin{document}
\maketitle

\begin{abstract}
  We show that the space of density matrices for $n$-qubit states,
  considered as a $(2^n)^2$-dimensional real vector space, has a basis
  consisting of density matrices of stabilizer states. We describe an
  application of this result to automated verification of quantum
  protocols.
\end{abstract}

\section{Definitions and Results}

We are working with the stabilizer formalism \cite{GottesmanD:claqec},
in which certain quantum states on sets of qubits are represented by
the intersection of their stabilizer groups with the group generated
by the Pauli operators. The stabilizer formalism is defined, explained
and illustrated in a substantial literature; good introductions are
given by Aaronson and Gottesman \cite{AaronsonS:impssc} and Nielsen
and Chuang \cite[Sec.~10.5]{NielsenMA:quacqi}.

In this paper we only need to use the following facts about stabilizer
states.
\begin{enumerate}
\item The standard basis states are stabilizer states.
\item The set of stabilizer states is closed under application of
  Hadamard ($\gH$), Pauli ($\gX$, $\gY$, $\gZ$), controlled not
  ($\gCnot$), and phase ($\gP = \matr{1}{0}{0}{i}$) gates.
\item The set of stabilizer states is closed under tensor product.
\end{enumerate}

\begin{notation}
  Write the standard basis for $n$-qubit states as $\{ \ket{x} \mid
  0\leqslant x < 2^n \}$, considering numbers to stand for their
  binary representations. We omit normalization factors when writing
  quantum states.
\end{notation}

\begin{definition}
Let $\ghz_n = \ket{0} + \ket{2^n-1}$ and
$\ighz_n = \ket{0} + i\ket{2^n-1}$, as $n$-qubit states.
\end{definition}

\begin{lemma}\label{lem:ghz}
For all $n$, $\ghz_n$ and $\ighz_n$ are stabilizer states.
\end{lemma}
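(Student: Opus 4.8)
The plan is to exhibit explicit Clifford circuits that build $\ghz_n$ and $\ighz_n$ from a standard basis state, and then invoke the closure properties stated above. Since the only gates I use are $\gH$, $\gCnot$ and $\gP$, closure under these gates (fact~2) together with fact~1 immediately delivers the result, so the entire argument reduces to checking that the chosen gate sequences produce exactly the right two-term superpositions.

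First I would construct $\ghz_n$ by the standard GHZ circuit. Starting from the $n$-qubit standard basis state $\ket{0}$, which is a stabilizer state by fact~1, I apply $\gH$ to the first qubit to obtain $\ket{0} + \ket{10\cdots0}$, and then apply $\gCnot$ with control $i$ and target $i{+}1$ for $i = 1, \ldots, n-1$ in turn. A short induction on the number of $\gCnot$ gates applied shows that after the $i$-th such gate the running state is $\ket{0\cdots0} + \ket{1\cdots10\cdots0}$ with exactly $i{+}1$ leading ones, so the final state is $\ket{0} + \ket{2^n-1} = \ghz_n$. Each step applies a gate covered by fact~2 to a stabilizer state, hence $\ghz_n$ is a stabilizer state.

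For $\ighz_n$ I would simply apply the phase gate $\gP$ to the first qubit of $\ghz_n$. On the branch $\ket{0}$ the first qubit is $\ket{0}$, which $\gP$ fixes, while on the branch $\ket{2^n-1}$ the first qubit is $\ket{1}$, which $\gP$ multiplies by $i$; the result is therefore $\ket{0} + i\ket{2^n-1} = \ighz_n$. As $\ghz_n$ is already a stabilizer state and $\gP$ preserves stabilizer states, $\ighz_n$ is a stabilizer state as well.

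Alternatively one can argue directly by induction on $n$ using fact~3: the base case $n=1$ is $\ghz_1 = \gH\ket{0}$ and $\ighz_1 = \gP\gH\ket{0}$, and the inductive step uses $\ghz_{n+1} = \gCnot(\ghz_n \otimes \ket{0})$, where $\ghz_n \otimes \ket{0}$ is a stabilizer state by fact~3. I do not expect a genuine \emph{obstacle} here; the only point requiring care is the verification that the $\gCnot$ cascade propagates the single flipped qubit across all $n$ positions and that $\gP$ attaches the phase $i$ to precisely the all-ones branch, which is the routine calculation sketched above.
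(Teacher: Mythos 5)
Your proof is correct and is essentially the paper's argument: the paper performs the same construction as an induction on $n$ (base case $\gH$, then $\gP$, applied to $\ket{0}$; inductive step a $\gCnot$ appended to $\ghz_{n-1}\otimes\ket{0}$), which is exactly your $\gCnot$ cascade unrolled, and you even state this inductive form as your ``alternative.'' The only cosmetic difference is that you obtain $\ighz_n$ by applying $\gP$ once to the finished $\ghz_n$ rather than carrying $\ighz$ through the induction in parallel; both are fine.
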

\begin{proof}
  By induction on $n$. For the base case ($n=1$), we have that
  $\ket{0} + \ket{1}$ and $\ket{0} + i\ket{1}$ are stabilizer states,
  by applying $\gH$ and then $\gP$ to $\ket{0}$.

  For the inductive case, $\ghz_n$ and $\ighz_n$ are obtained from
  $\ghz_{n-1}\otimes\ket{0}$ and $\ighz_{n-1}\otimes\ket{0}$,
  respectively, by applying $\qgate{CNot}$ to the two rightmost
  qubits.\qed
\end{proof}

\begin{lemma}\label{lem:sumstabilizer}
  If $0 \leqslant x,y < 2^n$ and $x\neq y$ then
  $\ket{x} + \ket{y}$ and
  $\ket{x} + i\ket{y}$ are stabilizer states.
\end{lemma}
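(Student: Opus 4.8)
The plan is to reduce the statement to Lemma~\ref{lem:ghz} by transforming the stabilizer states $\ghz_n$ and $\ighz_n$ into $\ket{x}+\ket{y}$ and $\ket{x}+i\ket{y}$ using only the gates listed in fact~2, which preserve the stabilizer property. The key observation is that $\gCnot$ and $\gX$ permute standard basis states without introducing phases, so any sequence of such gates that carries $\ghz_n$ to $\ket{x}+\ket{y}$ also carries $\ighz_n$ to $\ket{x}+i\ket{y}$, provided the term $\ket{0}$ is sent to $\ket{x}$ and $\ket{2^n-1}$ to $\ket{y}$. This lets me treat both cases with one construction.

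First I would split the $n$ qubit positions into the set $A$ on which the binary representations of $x$ and $y$ agree and the set $D$ on which they differ; since $x\neq y$ we have $D\neq\emptyset$, so I may fix a reference position $r\in D$. Starting from $\ghz_n=\ket{0\dots0}+\ket{1\dots1}$, in which every bit equals the bit at $r$ within each term, I apply $\gCnot$ with control $r$ and target $t$ for each $t\in A$. Because control and target bits are equal in both terms, this sets bit $t$ to $0$ in both terms, decoupling it; after processing all of $A$ the state is $\ket{0\dots0}+\ket{1\dots1}$ in which the $A$-bits are pinned to $0$ and the $D$-bits still carry the GHZ correlation.

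Next I would use $\gX$ to install the correct bit values. For each $t\in A$ whose common bit value in $x$ and $y$ is $1$, applying $\gX$ at $t$ sets that bit to $1$ in both terms. For each $t\in D$ with $x_t=1$, applying $\gX$ at $t$ sends the first term's $t$-bit to $1=x_t$ and the second term's $t$-bit to $0$; since $x$ and $y$ are complementary on $D$, the latter equals $y_t$. After these flips the first term is exactly $\ket{x}$ and the second exactly $\ket{y}$, and by the observation above the identical sequence takes $\ighz_n$ to $\ket{x}+i\ket{y}$.

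I do not expect a genuine obstacle, since the argument is essentially bookkeeping once the split into $A$ and $D$ is fixed; the one point needing care is the phase placement, namely verifying that the $\gCnot$ and $\gX$ steps send the $\ket{0}$ term of $\ighz_n$ to $\ket{x}$ and the $i$-weighted $\ket{2^n-1}$ term to $\ket{y}$, so that the factor $i$ lands on the intended basis state. Because all these gates act in place, no qubit permutation is required and the resulting state has the standard qubit ordering.
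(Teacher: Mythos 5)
Your proof is correct, but it takes a genuinely different route from the paper's. The paper argues by induction on $n$: if $x$ and $y$ agree in some bit position, it factors out that qubit as a tensor product $\ket{b}\otimes(\ket{x'}+\ket{y'})$ and invokes the induction hypothesis together with closure under tensor product; only in the fully complementary case does it fall back on $\ghz_n$ and $\ighz_n$ with $\gX$ corrections. You instead give a single non-inductive construction: an explicit Clifford circuit of $\gCnot$ gates (from a reference differing position $r$ into the agreeing positions) followed by $\gX$ gates, carrying $\ghz_n$ directly to $\ket{x}+\ket{y}$ for arbitrary $x\neq y$. Your observation that these gates are phase-free permutations of the computational basis, so the same circuit sends $\ighz_n$ to $\ket{x}+i\ket{y}$ with the factor $i$ landing on the correct term, is exactly the point that needs care, and you handle it; it also quietly disposes of the slightly awkward case $\ket{1}+i\ket{0}$ that the paper must treat separately in its base case. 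What the paper's induction buys is brevity and reliance only on the closure facts as black boxes; what your version buys is an explicit state-preparation circuit, which is arguably more useful for the model-checking application the paper has in mind, at the cost of the bit-level bookkeeping over the sets $A$ and $D$.
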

\begin{proof}
  By induction on $n$. For the base case ($n=1$), the closure
  properties imply that $\ket{0} + \ket{1}$, $\ket{0} + i\ket{1}$ and
  $\ket{1} + i\ket{0} = \ket{0} - i\ket{1}$ are stabilizer states.

  For the inductive case, consider the binary representations of $x$
  and $y$. If there is a bit position in which $x$ and $y$ have the
  same value $b$, then $\ket{x} + \ket{y}$ is the tensor product of
  $\ket{b}$ with an $(n-1)$-qubit state of the form $\ket{x'} +
  \ket{y'}$, where $x'\neq y'$. By the induction hypothesis,
  $\ket{x'}+\ket{y'}$ is a stabilizer state, and the conclusion
  follows from the closure properties. Similarly for $\ket{x} + i\ket{y}$.

  Otherwise, the binary representations of $x$ and $y$ are
  complementary bit patterns. In this case, $\ket{x}+\ket{y}$ can be
  obtained from $\ghz_n$ by applying $\gX$ to certain qubits. The
  conclusion follows from Lemma~\ref{lem:ghz} and the closure
  properties. The same argument applies to $\ket{x} + i\ket{y}$, using
  $\ighz_n$.\qed
\end{proof}

\begin{theorem}
  The space of density matrices for $n$-qubit states, considered as a
  $(2^n)^2$-dimensional real vector space, has a basis consisting of
  density matrices of $n$-qubit stabilizer states.
\end{theorem}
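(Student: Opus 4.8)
The plan is to work in the real vector space $V$ of $2^n \times 2^n$ Hermitian matrices, which has real dimension $(2^n)^2$ and is the natural ambient space here: it is exactly the real linear span of all density matrices. Writing $d = 2^n$, I would first fix a convenient real basis of $V$ consisting of the $d$ diagonal matrix units $\ketbra{x}{x}$, the $\binom{d}{2}$ real-symmetric matrices $\ketbra{x}{y} + \ketbra{y}{x}$, and the $\binom{d}{2}$ imaginary-antisymmetric matrices $i(\ketbra{x}{y} - \ketbra{y}{x})$, with $x,y$ ranging over $0 \leqslant x < y < d$. A quick count gives $d + 2\binom{d}{2} = d^2 = (2^n)^2$, matching the dimension of $V$.

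The key idea is that each of these basis matrices can be produced as a real linear combination of density matrices of stabilizer states, using only the three families $\ket{x}$, $\ket{x} + \ket{y}$ and $\ket{x} + i\ket{y}$, all of which are stabilizer states (the standard basis states by the closure properties, the superpositions by Lemma~\ref{lem:sumstabilizer}). I would record the three density matrices, normalized to trace $1$: $\rho_x = \ketbra{x}{x}$, $\sigma_{xy} = \frac{1}{2}(\ketbra{x}{x} + \ketbra{y}{y} + \ketbra{x}{y} + \ketbra{y}{x})$, and $\tau_{xy} = \frac{1}{2}(\ketbra{x}{x} + \ketbra{y}{y} - i\ketbra{x}{y} + i\ketbra{y}{x})$. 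A direct expansion then yields $\ketbra{x}{y} + \ketbra{y}{x} = 2\sigma_{xy} - \rho_x - \rho_y$ and $i(\ketbra{x}{y} - \ketbra{y}{x}) = \rho_x + \rho_y - 2\tau_{xy}$, while the diagonal units are the $\rho_x$ themselves.

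It follows that the $d^2$ stabilizer density matrices $\{\rho_x\} \cup \{\sigma_{xy}\} \cup \{\tau_{xy}\}$ span all of $V$. Since there are exactly $(2^n)^2$ of them and they span a space of dimension $(2^n)^2$, they are automatically linearly independent and hence form a basis; no separate independence computation is required. I expect the only point needing genuine care to be the setup rather than the algebra: correctly identifying the ambient $(2^n)^2$-dimensional real vector space with the Hermitian matrices and verifying that the stated family really is a basis of it. Once that is pinned down, the three linear-combination identities are routine and the counting argument closes the proof.
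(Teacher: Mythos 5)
Your proof is correct and follows essentially the same route as the paper: you take the same three families of stabilizer states ($\ket{x}$, $\ket{x}+\ket{y}$, $\ket{x}+i\ket{y}$, justified by Lemma~\ref{lem:sumstabilizer}), show their density matrices span the Hermitian matrices by recovering the standard Hermitian basis via subtraction, and conclude by counting. The only difference is cosmetic --- you normalize the projectors to trace $1$, which the paper leaves implicit under its convention of omitting normalization factors.
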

\begin{proof}
This is the space of Hermitian matrices and its obvious basis is the union of
\begin{equation}\label{eq1}
\{ \ket{x}\bra{x} \mid 0 \leqslant x < 2^n \}
\end{equation}
\begin{equation}\label{eq2}
\{ \ket{x}\bra{y} + \ket{y}\bra{x} \mid 0 \leqslant x < y < 2^n \}
\end{equation}
\begin{equation}\label{eq3}
\{ -i\ket{x}\bra{y} + i\ket{y}\bra{x} \mid 0 \leqslant x < y < 2^n \}.
\end{equation}
Now consider the union of
\begin{equation}\label{eq4}
\{ \ket{x}\bra{x} \mid 0 \leqslant x < 2^n \}
\end{equation}
\begin{equation}\label{eq5}
\{ (\ket{x}+\ket{y})(\bra{x}+\bra{y}) \mid 0 \leqslant x < y < 2^n \}
\end{equation}
\begin{equation}\label{eq6}
\{ (\ket{x}+i\ket{y})(\bra{x}-i\bra{y}) \mid 0 \leqslant x < y < 2^n \}.
\end{equation}
This is also a set of $(2^n)^2$ states, and it spans the space because
we can obtain states of forms (\ref{eq2}) and (\ref{eq3}) by subtracting
states of form (\ref{eq4}) from those of forms (\ref{eq5}) and
(\ref{eq6}). Therefore it is a basis, and by
Lemma~\ref{lem:sumstabilizer} it consists of stabilizer states.\qed
\end{proof}

\section{Applications}
The stabilizer formalism can be used to implement an efficient
classical simulation of quantum computation, if quantum operations are
restricted to those under which the set of stabilizer states is closed
--- i.e.\ the Clifford group operations. This result is the
Gottesman-Knill Theorem \cite{GottesmanD:stacqe}. Gay, Nagarajan and
Papanikolaou \cite{GayS:qmcmcq,GayS:spevq,PapanikolaouN:modcqp} have
applied it to formal verification of quantum systems, adapting
model-checking techniques \cite{ClarkeEM:modc} from classical computer
science. A simple example of model-checking is the following.

Consider a quantum teleportation protocol as a system with one qubit
as input and one (different) qubit as output; call this system
$\mathit{Teleport}$. Also consider the one-qubit identity operator
$I$. Then the specification that teleportation should satisfy is that
$\mathit{Teleport} = I$, where equality means the same transformation
of a one-qubit state. The aim of model-checking in this context is to
automatically verify that this specification is satisfied, by
simulating the operation of the two systems on all possible
inputs. For this to be possible, the teleportation protocol is first
expressed in a formal modelling language analogous to a programming
language.

The approach of Gay, Nagarajan and Papanikolaou reduces the problem to
that of simulating teleportation on stabilizer states as inputs, which
can be done efficiently because the teleportation protocol itself only
uses Clifford group operations. Correctness of teleportation on
stabilizer states is interpreted as evidence for, although not proof
of, correctness on arbitrary states. Now, however, we can draw a
stronger conclusion.

The teleportation protocol defines a superoperator; this can be
proved, for example, by the techniques of Selinger
\cite{SelingerP:towqpl}, who uses superoperators to define the
semantics of a programming language that can certainly express
teleportation. Superoperators, among other properties, are linear
operators on the space of density matrices. To check equivalence of
two superoperators, it is therefore sufficient to check that they have
the same effect on all elements of a particular basis. By taking a
basis that consists of stabilizer states, this can be done
efficiently.

Moreover, the number of stabilizer states on $n$ qubits is
approximately $2^{(n^2)/2}$, whereas the dimensionality of the space of
$n$-qubit density matrices is only $(2^n)^2 = 2^{2n}$. It is therefore
more efficient to model-check on a basis than on all stabilizer
states.
   
\section{Conclusion}
We have proved that the space of $n$-qubit density matrices has a
basis consisting of stabilizer states, and explained how this result
can be used to improve the efficiency and strengthen the results of
model-checking for quantum systems.

\bibliographystyle{simonplain}
\bibliography{main}

\end{document}